\newtheorem{theorem}{Theorem}
\newtheorem{lemma}{Lemma}
\newtheorem{example}{Example}
\title{Approximation ratio of RePair\footnote{The second and third author were supported by the DFG research grant LO 748/10-1.}}
\author[1]{Danny Hucke}
\author[2]{Artur Je\.{z}}
\author[1]{Markus Lohrey}
\affil[1]{University of Siegen, Germany}
\affil[2]{University of Wroclaw, Poland}
\date{}                     
\newcommand{\bb}{\mathbb}
\newcommand{\val}{\mathrm{val}}
\newcommand{\mc}{\mathcal}
\begin{document}

\maketitle

\begin{abstract}
In a seminal paper of Charikar et al.~on the smallest grammar problem, the authors derive upper and lower bounds on the approximation
ratios for several grammar-based
compressors.
Here we improve the lower bound for the famous {\sf RePair} algorithm from $\Omega(\sqrt{\log n})$ to $\Omega(\log n/\log\log n)$.
The family of words used in our proof is defined over a binary alphabet, while the lower bound from Charikar et al. needs an alphabet of logarithmic size in the length of the provided words.
\end{abstract}

\section{Introduction}

The idea of grammar-based compression is based on the fact that in many cases a word $w$ can be succinctly
represented by a context-free grammar that produces exactly $w$. Such a grammar is called a {\em straight-line 
program} (SLP) for $w$. In the best case, one gets an SLP of size $O(\log n)$ for a word of length $n$,
where the size of an SLP is the total length of all right-hand sides of the rules of the grammar. A {\em grammar-based
compressor} is an algorithm that produces for a given word $w$ an SLP $\bb A$ for $w$, where, of course, $\bb A$ 
should be smaller than $w$.  Grammar-based compressors can be found at many places in the literature. Probably the best known example 
is the classical {\sf LZ78}-compressor of Lempel and Ziv \cite{ZiLe78}. Indeed, it is straightforward to transform
the {\sf LZ78}-representation of a word $w$ into an SLP for $w$. Other well-known grammar-based compressors
are {\sf BISECTION} \cite{KiefferYNC00}, {\sf SEQUITUR} \cite{Nevill-ManningW97}, and {\sf RePair} \cite{LarssonM99}, just to mention a few. 

One of the first appearances of straight-line programs in the literature are  \cite{BerstelB87,Diw86}, where they are called {\em word chains}
(since they generalize addition chains from numbers to words). In  \cite{BerstelB87}, Berstel and Brlek prove that
the function $g(k,n) = \max \{ g(w) \mid w \in \{1,\ldots,k\}^n \}$, where $g(w)$ is the size of a smallest SLP for the word $w$,
is in $\Theta(n/\log_k n)$. Note that $g(k,n)$ measures the worst case SLP-compression over all words of length $n$ over
a $k$-letter alphabet. The first systematic investigations of grammar-based compressors are \cite{CLLLPPSS05,KiYa00}.
Whereas in \cite{KiYa00}, grammar-based compressors are used for universal lossless compression (in the information-theoretic
sense), Charikar et al.~study in \cite{CLLLPPSS05} the worst case approximation ratio of grammar-based compressors.
For a given grammar-based compressor $\mathcal{C}$ that computes from a given word $w$ an SLP $\mathcal{C}(w)$ for $w$
one defines the approximation ratio of $\mathcal{C}$ on $w$ as the quotient of the size of  $\mathcal{C}(w)$ and 
the size $g(w)$ of a smallest SLP for $w$. The approximation ratio $\alpha_{\mathcal{C}}(n)$ is the maximal approximation
ratio of $\mathcal{C}$ among all words of length $n$ over any alphabet. 
In \cite{CLLLPPSS05} the authors compute upper and lower bounds for the approximation ratios of several
grammar-based compressors (among them are the compressors mentioned above). 
The contribution of this paper is the improvement of the lower bound for {\sf RePair} from $\Omega(\sqrt{\log n})$ to $\Omega(\log n/\log\log n)$.
While in \cite{CLLLPPSS05} the lower bound needs an unbounded alphabet (the alphabet grows logarithmically in the length of 
the presented words) our family of words is defined over a binary alphabet.

{\sf RePair} works by repeatedly searching for a digram $d$ (a string of length two) with the maximal number
of non-overlapping occurrences in the current text and replacing all these occurrences by a new nonterminal $A$.
Moreover, the rule $A \to d$ is added to the grammar. {\sf RePair} is one of the so-called global grammar-based compressor from 
\cite{CLLLPPSS05} for which the approximation ratio seems to be very hard to analyze. Charikar et al. prove 
for all global grammar-based compressors an upper bound of $\mathcal{O}\left((n/\log n)^{2/3}\right)$
for the approximation ratio. Note that the gap to our improved lower bound $\Omega(\log n/\log\log n)$
is still large.

\paragraph{Related work.}
The theoretically best known grammar-based compressors with a polynomial (in fact, linear) running time achieve
an approximation ratio of $O(\log n)$ \cite{CLLLPPSS05,Jez15tcs,Jez16,Ryt03}.
In \cite{HuLoRe17}, the precise (up to constant factors) approximation ration for BISECTION (resp., LZ78) was shown to be
$\Theta( (n/\log n)^{1/2})$ (resp., $\Theta( (n/\log n)^{2/3})$).
In \cite{NavarroR08} the authors prove that {\sf RePair} combined with a simple binary encoding of the grammar
compresses every word $w$ over an alphabet of size $\sigma$
to at most $2 H_k(w) + o(|w| \log \sigma)$ bits, for any $k = o(\log_\sigma |w|)$,
where $H_k(w)$ is the $k$-th order entropy of $w$.

There is also a bunch of papers with practical applications for {\sf RePair}: 
web graph compression \cite{ClaudeN10}, bit maps \cite{NavarroPV11}, compressed
suffix trees \cite{Gonzalez07}. Some practical improvements of {\sf RePair} can
be found in \cite{GaJe17}.

\section{Preliminaries}

Let $[1,k] = \{1,\ldots,k\}$. 
Let $w=a_1\cdots a_n$ ($a_1,\dots,a_n\in\Sigma$) be a \emph{word} or \emph{string} over a finite \emph{alphabet} $\Sigma$.
The length $|w|$ of $w$ is $n$ and we denote by $\varepsilon$ the word of length $0$. 
We define $w[i]=a_i$ for $1\le i\le |w|$ and $w[i:j]=a_i\cdots a_j$ for $1\le i\le j\le |w|$.
Let $\Sigma^+ = \Sigma^* \setminus \{\varepsilon\}$
be the set of nonempty words.
For $w \in \Sigma^+$, we call $v\in\Sigma^+$ a \emph{factor} of $w$ if there exist $x,y\in\Sigma^*$ such that $w=xvy$.
If $x=\varepsilon$, then we call $v$ a \emph{prefix} of $w$.
For words $w_1,\dots, w_n\in\Sigma^*$, we further denote by $\prod_{i=j}^nw_i$ the word $w_jw_{j+1}\cdots w_n$ if $j\le n$ and $\varepsilon$ otherwise.

A \emph{straight-line program}, briefly SLP, is a context-free grammar that
produces a single word $w\in\Sigma^+$. 
Formally, it is a tuple $\bb A = (N,\Sigma, P, S)$, where $N$ is a 
finite set of nonterminals with $N\cap \Sigma = \emptyset$, 
$S \in  N$ is the start nonterminal, and $P$ is a finite
set of productions (or rules) of the form $A \to w$ for $A \in N$, $w \in (N \cup \Sigma)^+$ such that:
\begin{itemize}
\item For every $A \in N$, there exists exactly one production of the form $A \to w$, and
\item  the binary relation $\{ (A, B) \in N \times N \mid (A \to w) \in P,\;B \text{ occurs in } w \}$ is acyclic.
\end{itemize}
Every nonterminal $A \in N$ produces a unique string $\val_{\bb A}(A) \in \Sigma^+$.
The string defined by $\bb A$ is $\val(\bb A) = \val_{\bb A}(S)$. 
We omit the subscript $\bb A$ when it is clear from the context.
The \emph{size} of the SLP $\bb A$ is 
$|\bb A| = \sum_{(A \to w) \in P} |w|$.
We denote by $g(w)$ the size of a smallest SLP producing the word $w\in\Sigma^+$.
We will use the following lemma:

\begin{lemma}[\mbox{\cite[Lemma~3]{CLLLPPSS05}}] \label{lemma:folklore}
A string $w$ contains at most $g(w) \cdot k$ distinct factors of length $k$.
%
\end{lemma}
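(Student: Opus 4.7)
The plan is to charge each distinct length-$k$ factor of $w$ to a ``crossing position'' in some rule of a smallest SLP $\bb A$ for $w$, and then bound, separately, how many distinct factors can be charged to one crossing position and how many crossing positions there are. Fix such an $\bb A$, so $|\bb A| = g(w)$. The degenerate case $k=1$ is immediate: each length-$1$ factor is a single terminal of $\Sigma$ used in $w$, and every such terminal must appear in some right-hand side of $\bb A$, so the count is at most $|\bb A| = g(w) = g(w) \cdot 1$.

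For $k \ge 2$, I would pick, for each distinct factor $f$ of length $k$, one arbitrary occurrence of $f$ in $w$ and walk from the root down the derivation tree of $\bb A$ following that occurrence. At each visited nonterminal $A$ with rule $A\to \alpha_1\cdots\alpha_m$, either the occurrence of $f$ is contained in a single $\val(\alpha_i)$ (descend into $\alpha_i$) or it straddles at least one boundary between consecutive children. Because $k\ge 2$, the descent can never enter a terminal child, so it must terminate at some nonterminal $A$, where I assign $f$ to the pair $(A,i)$ with $i\in\{1,\dots,m-1\}$ the smallest index such that $f$ straddles the boundary between $\alpha_i$ and $\alpha_{i+1}$.

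Next I would bound the number of distinct factors charged to a fixed pair $(A,i)$. By minimality of $i$, the chosen occurrence of $f$ starts somewhere inside $\val(\alpha_i)$, and because $f$ has length $k$ and extends past $\val(\alpha_i)$, that starting position lies among the last $\min(k-1,|\val(\alpha_i)|)\le k-1$ positions of $\val(\alpha_i)$. Since the starting position determines $f$, at most $k-1$ distinct factors are charged to $(A,i)$. Summing over all pairs gives
\[
\#\text{pairs}\;=\;\sum_{A\to \alpha_1\cdots\alpha_m}(m-1)\;\le\;\sum_A m\;=\;|\bb A|\;=\;g(w),
\]
so the number of distinct length-$k$ factors is at most $(k-1)\cdot g(w)\le k\cdot g(w)$.

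The argument is entirely combinatorial and I do not anticipate a serious obstacle; the only point that genuinely needs care is the bookkeeping of the charging scheme. One must fix exactly one occurrence per distinct factor (otherwise the same factor may be charged multiple times) and stop the descent at the \emph{topmost} straddled boundary (otherwise the ``starting position in $\val(\alpha_i)$'' need not lie in the last $k-1$ slots, and the per-pair bound collapses). With these two choices made explicit, the counting proceeds as above.
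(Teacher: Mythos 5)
Your proof is correct, and since the paper does not prove this lemma itself but imports it from \cite[Lemma~3]{CLLLPPSS05}, the right comparison is with that source: your charging argument (descend the derivation tree to the topmost rule where the chosen occurrence of each distinct factor straddles a boundary between consecutive right-hand-side symbols, then count at most $k-1$ possible start positions per boundary) is essentially the standard proof given there. Your bookkeeping even yields the marginally sharper bound $(k-1)\cdot g(w)$ for $k\ge 2$, which of course implies the stated $g(w)\cdot k$.
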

A grammar-based compressor $\mc C$ is an algorithm that computes for a nonempty word $w$ an SLP $\mc C(w)$ such that $\val(\mc C(w))=w$.
The \emph{approximation ratio} $\alpha_{\mc C}(w)$ of $\mc C$ for an input $w$ is defined as $|\mc C(w)|/g(w)$.
The worst-case approximation ratio $\alpha_{\mc C}(k,n)$ of $\mc C$ is the maximal approximation ratio over
all words of length $n$ over an alphabet of size $k$:
\[\alpha_{\mc C}(k,n)=\max \{ \alpha_{\mc C}(w) \mid w \in [1,k]^n \} = \max\{ |\mc C(w)|/g(w) \mid w \in [1,k]^n \} \]
If the alphabet size is unbounded, i.e., if we allow alphabets of size $|w|$, then we write $\alpha_{\mc C}(n)$ instead of $\alpha_{\mc C}(n,n)$.

\section{RePair}
For a given SLP $\bb A = (N,\Sigma, P, S)$, a word $\gamma \in (N \cup \Sigma)^+$ is called a \emph{maximal string} of $\bb A$ if
\begin{itemize}
\item $|\gamma|\ge 2$,
\item $\gamma$ appears at least twice without overlap in the right-hand sides of $\bb A$,
\item and no strictly longer word appears at least as many times on the ride-hand sides of $\bb A$ without overlap.
\end{itemize}
A \emph{global grammar-based compressor} starts on input $w$ with the SLP $\bb A=(\{S\},\Sigma, \{S\to w\}, S)$.
In each round, the algorithm selects a maximal string $\gamma$ of $\bb A$
and updates $\bb A$ by replacing a largest set of a pairwise non-overlapping occurrences of $\gamma$ in $\bb A$ by a fresh nonterminal $X$.
Additionally, the algorithm introduces the rule $X\to \gamma$.
The algorithm stops when no maximal string occurs.
The global grammar-based compressor {\sf RePair}~\cite{LarssonM99} selects in each round a most frequent maximal string.
Note that the replacement is not unique, e.g. the word $a^5$ with the maximal string $\gamma=aa$ yields SLPs with rules $S\to XXa, X\to aa$ or $S\to XaX, X\to aa$ or $S\to aXX, X\to aa$.
We assume the first variant in this paper, i.e. maximal strings are replaced from left to right.

The above description of RePair is taken from~\cite{CLLLPPSS05}. In most papers on {\sf RePair} the algorithm works slightly different: It replaces
in each step a digram (a string of length two) with the maximal number of pairwise non-overlapping occurrences in the right-hand sides.  
For example, for the string $w = abcabc$ this produces the SLP $S \to BB$, $B \to Ac$, $A \to ab$, whereas the 
{\sf RePair}-variant from \cite{CLLLPPSS05} produces the smaller SLP $S \to AA$, $A \to abc$.

The following lower and upper bounds on the approximation ratio of {\sf RePair} were shown in~\cite{CLLLPPSS05}:
\begin{itemize}
\item $\alpha_\mathsf{RePair}(n)\in\Omega\left(\sqrt{\log n}\right)$
\item $\alpha_\mathsf{RePair}(2,n)\in \mathcal{O}\left((n/\log n)^{2/3}\right)$
\end{itemize}
The proof of the lower bound in~\cite{CLLLPPSS05} assumes an alphabet of unbounded size.
To be more accurate, the authors construct  for every $k$ a word $w_k$ of length $\Theta(\sqrt{k} 2^k)$ over and alphabet of size 
$\Theta(k)$ such that $g(w) \in O(k)$ and {\sf RePair} produces a grammar of size $\Omega(k^{3/2})$ for $w_k$.               
We will improve this lower bound using only a binary alphabet. To do so, we first need to know how {\sf RePair} compresses unary words.
\begin{example}[unary inputs]
\label{unary}
{\sf RePair} produces on input $a^{27}$ the SLP with rules $X_1\to aa$, $X_2\to X_1X_1$, $X_3\to X_2X_2$ and $S\to X_3X_3X_3X_1a$, where $S$ is the start nonterminal. For the input $a^{22}$ only the start rule $S\to X_3X_3X_2X_1$ is different.
\end{example}

In general, {\sf RePair} creates on unary input $a^m$ ($m\ge 4$) the rules $X_1\to aa$, $X_i\to X_{i-1}X_{i-1}$ for $2\le i\le \lfloor \log m\rfloor-1$ and a start rule, which is strongly related to the binary representation of $m$ since each nonterminal $X_i$ produces the word $a^{2^i}$. To be more accurate, let
$b_{\lfloor \log m\rfloor} b_{\lfloor \log m\rfloor-1}\cdots b_1b_0$
be the binary representation of $m$ and define the mappings $f_i$ ($i \geq 0$) by:
\begin{itemize}
\item $f_0:\{0,1\}\to\{a,\varepsilon\}$ with $f_0(1)=a$ and $f_0(0)=\varepsilon$,\label{f0}
\item $f_i:\{0,1\}\to \{X_i,\varepsilon\}$ with $f_i(1)=X_i$ and $f_i(0)=\varepsilon$ for $i\ge 1$. 
\end{itemize}
Then the start rule produced by {\sf RePair} on input $a^m$ is 
\begin{center}
$S\to X_{\lfloor\log m\rfloor-1}X_{\lfloor\log m\rfloor-1}f_{\lfloor\log m\rfloor-1}(b_{\lfloor\log m\rfloor-1})\cdots f_1(b_1)f_0(b_0)$.
\end{center}
This means that the symbol $a$ only occurs in the start rule if $b_0=1$, and the nonterminal $X_i$ ($1\le i\le \lfloor\log m\rfloor-2$) occurs in the start rule if and only if
$b_i=1$. Since {\sf RePair} only replaces words with at least two occurrences, the most significant bit $b_{\lfloor \log m\rfloor}=1$ is represented by $X_{\lfloor\log m\rfloor-1}X_{\lfloor\log m\rfloor-1}$.
Note that for $1 \leq m \leq 3$, {\sf RePair} produces the trivial SLP $S \to a^m$.

\section{Main result}

The main result of this paper states:

\begin{theorem} \label{thm}
$\alpha_\mathsf{RePair}(2,n)\in\Omega\left(\log n/\log\log n\right)$ 
\end{theorem}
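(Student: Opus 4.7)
The plan is to exhibit, for each integer $k\ge 2$, a binary word $w_k$ of length polynomial in $k$ such that $g(w_k)=O(k)$ while the SLP produced by {\sf RePair} on $w_k$ has size $\Omega(k\log k/\log\log k)$. Since $|w_k|$ will be polynomial in $k$ we have $\log|w_k|=\Theta(\log k)$, so the claimed ratio $\Omega(\log n/\log\log n)$ follows.

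In view of the analysis of {\sf RePair} on unary inputs recalled just before the theorem, a natural candidate is the ``staircase'' word
\[
w_k \;=\; a\,b\,a^2\,b\,a^3\,b\cdots a^k\,b
\]
of length $n=|w_k|=k(k+1)/2+k=\Theta(k^2)$. The upper bound $g(w_k)=O(k)$ is witnessed by the explicit SLP with productions $A_1\to a$, $A_i\to A_{i-1}\,a$ for $i=2,\dots,k$, and start rule $S\to A_1\,b\,A_2\,b\cdots A_k\,b$, whose total size is $4k-1$.

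For the lower bound on {\sf RePair}$(w_k)$, I would proceed in two phases. In the \emph{doubling phase}, one verifies by induction on $j$ that the digram $X_{j-1}X_{j-1}$ (with $X_0:=a$) has the most non-overlapping occurrences globally---roughly $\Theta(k^2/2^{j+1})$ after the previous doublings---so {\sf RePair} introduces nonterminals $X_1,\dots,X_L$ with $L=\Theta(\log k)$, exactly as on a single unary input. After this phase each block $a^i$ has been rewritten as a word $\beta(i)$ of length $\mathrm{popcount}(i)+1$ encoding the binary expansion of $i$, by the formula recalled in the preceding section. Since $\sum_{i=1}^{k}\mathrm{popcount}(i)=\Theta(k\log k)$, the start rule now has length $\Theta(k\log k)$.

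The main obstacle is the \emph{post-doubling phase}: showing that {\sf RePair} cannot substantially further compress this start rule. Here I plan to combine Lemma~\ref{lemma:folklore} with a careful combinatorial count of the distinct factors of a suitably chosen length $\ell$ appearing inside the post-doubling start rule, viewed as a string over the extended alphabet $\{X_1,\dots,X_L,a,b\}$. The intuition is that the binary expansions of $1,2,\dots,k$ produce an abundance of locally distinct patterns, so every subsequent {\sf RePair} step must still leave $\Omega(k\log k/\log\log k)$ essential symbols in the grammar; the delicate point is to choose $\ell$ large enough so that enough distinct factors are forced, but small enough so that Lemma~\ref{lemma:folklore} remains strong. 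Dividing by $g(w_k)=O(k)$ then gives the desired ratio $\Omega(\log k/\log\log k)=\Omega(\log n/\log\log n)$.
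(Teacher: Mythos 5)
Your overall skeleton --- $a$-blocks separated by $b$'s, a ``doubling phase'' in which {\sf RePair} behaves as on unary inputs, then an application of Lemma~\ref{lemma:folklore} to the intermediate start rule, divided by $g=O(k)$ --- is precisely the skeleton of the paper's proof. The gap lies in the choice of the word, and it is fatal exactly at the step you flag as the main obstacle. For the staircase word, the post-doubling start rule $\beta(1)\,b\,\beta(2)\,b\cdots\beta(k)\,b$ is itself compressible to size $O(k)$: writing $\delta(m)=X_{j_1}X_{j_2}\cdots X_{j_r}$ for the sparse encoding of $m$ by its set bit positions $j_1>j_2>\dots>j_r$ (with $X_0=a$), one has $\delta(m)=X_{j_1}\,\delta(m-2^{j_1})$, so one rule of size at most $2$ per value $m\le k$ generates all the $\delta(m)$, and $O(k)$ further constant-size rules assemble the $\beta(i)$ and the start rule. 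Hence the smallest SLP for your intermediate string has size $O(k)$, and no lower-bound technique applied to that string --- in particular not Lemma~\ref{lemma:folklore} --- can certify that {\sf RePair}'s output has size $\omega(k)$. The counting also fails directly: to extract $g\ge k\log k/\log\log k$ from Lemma~\ref{lemma:folklore} with factors of length $\ell$ you need more than $\ell\cdot k\log k/\log\log k$ distinct such factors, and since the string has only $\Theta(k\log k)$ positions this forces $\ell=O(\log\log k)$; but over the alphabet $\{X_1,\dots,X_L,a,b\}$ of size $\Theta(\log k)$ there exist only $(\log k)^{O(\log\log k)}=2^{O((\log\log k)^2)}\ll k$ words of that length, far too few.

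The paper repairs exactly this defect by making the binary expansions of the block lengths incompressible in the relevant sense: the exponents are $w_k[1:k+1],\dots,w_k[1:2k]$ read as binary numbers, where $w_k$ is the image of a De Bruijn sequence under $0\mapsto 01$, $1\mapsto 10$. Consecutive exponents still satisfy ``double, possibly plus one'', so $g(s_k)=O(k)$ just as in your upper bound; the blocks now have lengths $\Theta(2^{k+i})$, so $n=\Theta(4^k)$ and $k=\Theta(\log n)$; and after $k-1$ doubling rounds the encoded residues $v_i$ contain $\Theta(k^2)$ distinct factors of length $\Theta(\log k)$, whence Lemma~\ref{lemma:folklore} yields $\Omega(k^2/\log k)$ for {\sf RePair} and the ratio $\Omega(k/\log k)=\Omega(\log n/\log\log n)$. (The paper also makes your induction for the doubling phase rigorous by noting that each of the $k$ blocks starts with $X_{j}X_{j}$ while $b$ occurs only $k-1$ times, so the doubling digram always wins.) To salvage your write-up you would have to replace the exponents $1,2,\dots,k$ by such De Bruijn-derived exponents; the staircase itself cannot work.
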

\begin{proof}
We start with a binary De-Bruijn sequence $B_{\lceil\log k\rceil}\in \{0,1\}^*$ of length $2^{\lceil\log k\rceil}$
such that each factor of length $\lceil\log k\rceil$ occurs at most once 
\cite{deBr46}.
We have $k\le|B_{\lceil\log k\rceil}|< 2k$. Note that De-Bruijn sequences are not unique, so without loss of generality 
let us fix a De-Bruijn sequence which starts with $1$ for the remaining proof.
We define a homomorphism $h:\{0,1\}^*\to\{0,1\}^*$ by $h(0)=01$ and $h(1)=10$. The words $w_k$ of length $2k$ are defined as
$$w_k=h(B_{\lceil\log k\rceil}[1:k]).$$
For example for $k=4$ we can take $B_2=1100$, which yields $w_4=10100101$.
We will analyze the approximation ratio of {\sf RePair} for the binary words
$$s_k=\prod_{i=1}^{k-1}\left(a^{w_k[1:k+i]}b\right)a^{w_k}=a^{w_k[1:k+1]}ba^{w_k[1:k+2]}b\dots a^{w_k[1:2k-1]}ba^{w_k},$$
where the prefixes $w_k[1:k+i]$ for $1\le i\le k$ are interpreted as binary numbers. For example we have $s_4=a^{20}ba^{41}ba^{82}ba^{165}$.

Since $B_{\lceil\log k\rceil}[1]=w_k[1]=1$, we have $2^{k+i-1}\le\left|a^{w_k[1:k+i]}\right|\le 2^{k+i}-1$ for $1\le i\le k$ and thus $|s_k|\in \Theta\left(4^k\right)$.

\medskip
\noindent 
{\em Claim 1.} A smallest SLP producing $s_k$ has size $\mc O(k)$.

\medskip
\noindent

There is an SLP $\bb A$ of size $\mc O(k)$ for the first $a$-block $a^{w_k[1:k+1]}$ of length $\Theta(2^k)$. 
Let $A$ be the start nonterminal of $\bb A$.
For the second $a$-block $a^{w_k[1:k+2]}$ we only need one additional rule: If $w_k[k+2]=0$, then we can produce $a^{w_k[1:k+2]}$ by the fresh nonterminal $B$ using the rule $B\to AA$. Otherwise, if $w_k[k+2]=1$, then we use $B\to AAa$. The iteration of that process yields for each $a$-block only one additional rule of size at most $3$. If we replace the $a$-blocks in $s_k$ by nonterminals as described, then the resulting word has size $2k+1$ and hence $g(s_k)\in \mc O(k)$.

\medskip
\noindent 
{\em Claim 2.} The SLP produced by {\sf RePair} on input $s_k$ has size $\Omega(k^2/\log k)$.

\medskip
\noindent

On unary inputs of length $m$, the start rule produced by {\sf RePair} is strongly related to the binary encoding of $m$ as described above. 
On input $s_k$, the algorithm starts to produce a start rule which is similarly related to the binary words $w_k[1:k+i]$ for $1\le i\le k$.
Consider the SLP $\mathbb{G}$ which is produced by {\sf RePair} after $(k-1)$ rounds on input $s_k$. We claim that up to this point {\sf RePair} is not affected by the $b$'s in $s_k$ and therefore has introduced the rules $X_1\to aa$ and $X_i\to X_{i-1}X_{i-1}$ for $2\le i\le k-1$. If this is true, then the start rule after $k-1$ rounds begins with
\begin{center}
$S\to X_{k-1}X_{k-1}f_{k-1}(w_k[2])f_{k-2}(w_k[3])\cdots f_0(w_k[k+1])b\cdots$
\end{center}
where $f_0(1)=a$, $f_0(0)=\varepsilon$ and $f_i(1)=X_i$, $f_i(0)=\varepsilon$ for $i\ge 1$.
All other $a$-blocks are longer than the first one, hence each factor of the start rule which corresponds to an $a$-block begins with $X_{k-1}X_{k-1}$. Therefore, the number of occurrences of $X_{k-1}X_{k-1}$ in the SLP is at least $k$. 
Since the symbol $b$ occurs only $k-1$ times in $s_k$, it follows that our assumption is correct and {\sf RePair} is not affected by the $b$'s in the first $(k-1)$ rounds on input $s_k$.
Also, for each block $a^{w_k[1:k+i]}$, the $k-1$ least significant bits of $w_k[1:k+i]$ ($1\le i\le k$) are represented in the corresponding factor of the start rule of $\mathbb{G}$, 
i.e., the start rule contains non-overlapping factors $v_i$ with
\begin{equation}
v_i=f_{k-2}(w_k[i+2])f_{k-3}(w_k[i+3])\dots f_1(w_k[k+i-1])f_0(w_k[k+i])\label{blockencoding}
\end{equation}
for $1\le i\le k$. For example after $3$ rounds on input $s_4=a^{20}ba^{41}ba^{82}ba^{165}$, we have the start rule
$$S\to \underbrace{X_3X_3X_2}_{a^{20}}b\underbrace{X_3^5a}_{a^{41}}b\underbrace{X_3^{10}X_1}_{a^{82}}b\underbrace{X_3^{20}X_2a}_{a^{165}},$$ 
where $v_1=X_2$, $v_2=a$, $v_3=X_1$ and $v_4=X_2a$.
The length of the factor $v_i\in\{a,X_1,\dots,X_{k-2}\}^*$ from equation~\eqref{blockencoding} is exactly the number of $1$'s in the word $w_k[i+2:k+i]$. Since $w_k$ is constructed by the homomorphism $h$, it is easy to see that $|v_i|\ge (k-3)/2$.
Note that no letter occurs more than once in $v_i$, hence $g(v_i)=|v_i|$. Further, each substring of length $2\lceil\log k\rceil+2$ occurs at most once in $v_1,\dots,v_k$, because otherwise there would be a factor of length $\lceil\log k\rceil$ occurring more than once in $B_{\lceil\log k\rceil}$. It follows that there are at least 
$$k\cdot ( \lceil(k-3)/2\rceil-2\lceil\log k\rceil-1)\in\Theta(k^2)$$
different factors of length $2\lceil\log k\rceil+2\in\Theta(\log k)$ in the right-hand side of the start rule of $\mathbb G$. By Lemma~\ref{lemma:folklore} it follows that a smallest SLP for the right-hand side of the start rule has size $\Omega(k^2/\log k)$ and therefore $|\mathsf{RePair}(s_k)|\in\Omega(k^2/\log k)$.

\medskip
\noindent 
In conclusion: We showed that a smallest SLP for $s_k$ has size $\mc O(k)$, while 
{\sf RePair} produces an SLP of size $\Omega(k^2/\log k)$.
This implies $\alpha_{\mathsf{RePair}}(s_k) \in \Omega(k/\log k)$, which together with $n=|s_k|$ and $k\in\Theta(\log n)$ finishes the proof.
\end{proof}
Note that in the above prove, {\sf RePair} chooses in the first $k-1$ rounds a digram for the replaced maximal string.
Therefore, Theorem~\ref{thm} also holds for the {\sf RePair}-variant, where in every round a digram (which is not necessarily
a maximal string) is replaced.


\end{document}